\newtheorem{coro}{Corollary}
\newtheorem{lema}{Lemma}
\newtheorem{theorem}{Theorem}
\newtheoremstyle{case}{}{}{}{}{}{:}{ }{}
\theoremstyle{case}
\newtheorem{case}{Case}
\begin{document}

%
\title{Data Backup Network Formation with Heterogeneous Agents \\ {\large (Extended Abstract)}}
%
%
%

\author{Harshit~Jain$^{a}$, Guduru~Sai~Teja$^{a}$, Pramod~Mane$^{a}$, Kapil~Ahuja$^{a}$ and Nagarajan~Krishnamurthy$^{b}$
\thanks{$^{a}$Authors are with the IIT Indore. Corresponding author, Kapil Ahuja's, email: \{kahuja\}@iiti.ac.in}
\thanks{$^{b}$Author is with the IIM Indore.}}
\maketitle

\begin{abstract}
Social storage systems \cite{friendstore, hybridfriendbox, f2femepricalstudy} are becoming increasingly popular compared to the existing data backup systems like local, centralized and P2P systems. An endogenously built symmetric social storage model and its aspects like the utility of each agent, bilateral stability, contentment, and efficiency have been extensively discussed in \cite{Mane2017-social-storage}. We include heterogeneity in this model by using the concept of Social Range Matrix from \cite{stefan-social-range-matrix}. 

Now, each agent is concerned about its perceived utility, which is a linear combination of its utility as well as others utilities (depending upon whether the pair are friends, enemies or do no™t care about each other). We derive conditions when two agents may want to add or delete a link, and provide an algorithm that checks if a bilaterally stable network is possible or not. Finally, we take some special Social Range Matrices and prove that under certain conditions on network parameters, a bilaterally stable network is unique.
\end{abstract}

\begin{IEEEkeywords}
Social Storage, F2F Backup System, Endogenous and Heterogeneous Network Formation, Bilateral Stability.
\end{IEEEkeywords}
%
\IEEEpeerreviewmaketitle

\section{Introduction and Motivation}
In recent years, social storage (friend-to-friend) is emerging as an alternative to local (e.g., hard disc), online (e.g., cloud), and Peer-to-Peer data backup systems. Social storage allows users to store their data on their social relatives' (e.g., friends, colleagues, family, etc.) storage devices. Researchers believe that social relationships (which are at the core of social storage) help to mitigate issues like data availability, reliability, and security.

Initial work in this field has primarily focused on developing techniques to exogenously build social storage systems, and performing Quality of Services (QoS) analysis in terms of data reliability and availability in these systems. A most recent study in \cite{Mane2017-social-storage} focuses on explicit data backup partner selection, where agents themselves select their partners. This selection is studied in a strategic setting, and eventually builds a social storage network. They model a utility function, which reveals the benefit an individual receives in a social storage network. Further, they analyze the network by using bilateral stability as a solution concept, where no pair of agents add or delete a link without their mutual consent. 

There are several advantages of this approach. First, this approach makes it possible to incorporate user's strategic behavior (in terms of with whom user wants to form social connections and with whom it does not). Second, this approach helps us to predict the following: which network is likely to emerge; which one is stable (in which no individual has incentives to alter the structure of the network either by forming new or deleting existing social connections); and which one is the best (contended and/ or efficient) from all the participants€™ point of view.

However, the above strategic setting does not consider heterogeneous behavior (e.g., selfless, selfish, etc.) in the network formation. In this paper, we focus on this aspect. This way, we make the model close to a real-world scenario. Although, doing so makes it challenging to deal with the model and as well as predict its outcome.

To achieve the above, we incorporate the concept of Social Range Matrix introduced in \cite{stefan-social-range-matrix} while exploring different social relationships between agents. We modify the utility of agents in the social storage model discussed in \cite{Mane2017-social-storage}. Further, we revisit the results regarding bilateral stability of such a model.


\section{Model}
Our model is composed of four components. First is symmetric social storage. Second is the utility (as derived in \cite{Mane2017-social-storage}) of each agent. Third is Social Range Matrix, which captures the social relationship between these agents. Fourth is the perceived utility obtained by combining the second and the third components.  

A symmetric social network $g$ is a data backup network consisting of $N$ number of agents and a set of links connecting these agents. A link $\langle ij \rangle \in g$ represents that the agents $i$ and $j$ are data backup partners, who store their data on each other's shared storage space. In $g$, pairs of agents share an equal amount of storage space. Agents perform two actions, link addition (represented by $g+\langle ij\rangle$) and deletion (represented by $g-\langle ij\rangle$).

The utility function in \cite{Mane2017-social-storage} reveals the cost and benefit that each agent $i$ receives in $g$ and is given by
$$u_i (g) = \beta*(1-\lambda^{n_i(g)})-c*n_i(g).$$
This utility function $u_i (g)$ consists of the following parameters: the agent $i$'s  neighborhood size $n_i(g)$; the benefit $\beta$ associated with data; the cost $c$ that the agent $i$ incurs to maintain its neighbours; and the probability of disk failure $\lambda$. Note that $c, \beta$, and $\lambda$ lie between $0$ and $1$. The utility function is a combination of two objectives for each agent $i$. The first is to minimize the total cost of the links, which is $c*n_i(g)$. The second is to maximize the expected data backup benefit, which is $\beta*(1-\lambda^{n_i(g)})$. Note that from now onwards we will use $n_i$ to represent neighbourhood size of agent $i$ in $g$.

In our model, we consider three type of agents. First, where an agent helps to maximize other's utility (friend). Second, where an agent aims to decrease other's utility (enemy). Third, where an agent does not care about other's utility (neutral). Based upon \cite{stefan-social-range-matrix}, we represent the above kind of social relationships between pairs of agents in a Social Range Matrix $F$. Here, each element  $f_{ij}$ denotes the above discussed social relationship between $i$ and $j$. That is, 
\[   
f_{ij}  
     \begin{cases}
       >\text{0} &\quad\text{$i$ and $j$ are friend of each other,}\\
      < \text{0} &\quad\text{$i$ and $j$ are enemy of each other, and  }\\
       =\text{0} &\quad\text{$i$ and $j$ do not care about each other.}\\
     \end{cases}
\]
\centerline{Also, $j \in$ the set of all agents including $i$ and $f_{ii} > 0$.}

Many such interesting matrices are possible, for example, a matrix of all ones mean that all pair of agents are friends, and a matrix of all zeroes except the diagonal elements means weak social links.

In our paper, for the sake of simplicity, we use only three values for $f_{ij}$. That is, 1, -1, and 0 indicating friend, enemy, neutral, respectively. We also consider that agents give more importance to other agents utilities than their own utility, i.e. $ f_{ii} < |f_{ij}|$  for all $j$. 

Now, agents have different social relationships with others, and hence, their utility not only depends on the structure of the network but also their social relationships with others. We define the new utility (of agent $i$ in $g$) as the perceived utility, which is defined as follows: 
\begin{align}
\widetilde{u_i}(g) = &\sum_{j}^{ }f_{ij} u_j(g).
\end{align}

In this setting, each agent's objective is to maximize its perceived utility (which takes care of utilities of other agents). Thus, the optimization problem is\\
$$max\ (\widetilde{u_i}(g)).$$

\section{Condition for link addition and deletion}
Link addition between an agent $i$ and an agent $j$ occurs only when the perceived utilities of both the agents increases, i.e.
\begin{align}
\widetilde{u_i}(g+\langle ij \rangle) &> \widetilde{u_i}(g)\hspace{0.1 cm}\\
and\nonumber\\
\widetilde{u_j}(g+\langle ij \rangle) &> \widetilde{u_j}(g).
\end{align}
For all agents $k$, except $i$ and $j$, the neighbourhood size ($n_k$) remains constant so their utility is the same and cancels out. Simplifying (2) and (3) with (1), we get that link addition happens when 
\begin{align}
f_{ii}{\lambda}^{n_i}+f_{ij}{\lambda}^{n_j}>\frac{(f_{ii}+f_{ij})*c}{(1-\lambda)*\beta} & \quad and \\
f_{ii}{\lambda}^{n_j}+f_{ij}{\lambda}^{n_i}>\frac{(f_{ii}+f_{ij})*c}{(1-\lambda)*\beta}. &
\end{align}
Similarly, link deletion happens when
\begin{align}
\widetilde{u_i}(g-\langle ij \rangle) &> \widetilde{u_i}(g)\hspace{0.1 cm}\nonumber \\
and\nonumber\\
\widetilde{u_j}(g-\langle ij \rangle) &> \widetilde{u_j}(g).\nonumber
\end{align}
Thus, equivalent to (4) and (5) we get link deletion conditions as below.
\begin{align}
f_{ii}{\lambda}^{n_i}+f_{ij}{\lambda}^{n_j}&<\frac{(f_{ii}+f_{ij})*c*\lambda}{(1-\lambda)*\beta} & and\\
f_{ii}{\lambda}^{n_j}+f_{ij}{\lambda}^{n_i}&<\frac{(f_{ii}+f_{ij})*c*\lambda}{(1-\lambda)*\beta}. &
\end{align}

\section{Sufficiency condition for link addition}
\begin{theorem}
For any two agents $i$ and $j$, let $t_{1}$ = max\ ($n_{i}$, $n_{j}$), $t_{2}$ = min\ ($n_{i}$, $n_{j}$) and $f_{ij} >$ 0. If  $t_{1} < \frac{\left|\ln\left(\frac{c}{\left(1-\lambda\right)*\beta}\right)\right|}{\left|\ln \lambda\right|}$, the link addition conditions (4) and (5) will be true, and hence, a link will be formed between $i$ and $j$.
\end{theorem}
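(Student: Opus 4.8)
The plan is to collapse the two link-addition inequalities (4) and (5) into a single scalar inequality in $\lambda$ and $t_1$, and then invert that inequality by taking logarithms while keeping careful track of signs.

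First I would record the sign facts that drive the argument: $f_{ij}>0$ by hypothesis and $f_{ii}>0$ by the model, so $f_{ii}+f_{ij}>0$; and since $\lambda\in(0,1)$, the function $x\mapsto\lambda^{x}$ is positive and strictly decreasing. Because $n_i\le t_1$ and $n_j\le t_1$, we get $\lambda^{n_i}\ge\lambda^{t_1}$ and $\lambda^{n_j}\ge\lambda^{t_1}$, hence each left-hand side appearing in (4) and (5) satisfies $f_{ii}\lambda^{n_i}+f_{ij}\lambda^{n_j}\ge(f_{ii}+f_{ij})\lambda^{t_1}$ and likewise $f_{ii}\lambda^{n_j}+f_{ij}\lambda^{n_i}\ge(f_{ii}+f_{ij})\lambda^{t_1}$. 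Therefore it suffices to prove the single, stronger inequality $(f_{ii}+f_{ij})\lambda^{t_1}>\dfrac{(f_{ii}+f_{ij})c}{(1-\lambda)\beta}$; cancelling the positive factor $f_{ii}+f_{ij}$ reduces the whole theorem to showing $\lambda^{t_1}>\dfrac{c}{(1-\lambda)\beta}$.

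The second step is to show that this last inequality is exactly what the hypothesis on $t_1$ delivers. Taking logarithms (legitimate since both sides are positive) turns $\lambda^{t_1}>\tfrac{c}{(1-\lambda)\beta}$ into $t_1\ln\lambda>\ln\!\big(\tfrac{c}{(1-\lambda)\beta}\big)$, and dividing through by $\ln\lambda<0$ reverses the inequality to $t_1<\dfrac{\ln\big(c/((1-\lambda)\beta)\big)}{\ln\lambda}$. In the operative regime where a backup link is worthwhile, $c<(1-\lambda)\beta$, so $\ln\big(c/((1-\lambda)\beta)\big)<0$ as well; the quotient of two negative numbers then equals $\dfrac{\left|\ln\big(c/((1-\lambda)\beta)\big)\right|}{\left|\ln\lambda\right|}$, which is precisely the bound assumed in the statement. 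Reading the equivalences back up the chain gives $\lambda^{t_1}>c/((1-\lambda)\beta)$, hence (4) and (5) hold, hence a link forms between $i$ and $j$.

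I expect the only real subtlety to be bookkeeping rather than substance. One must (i) handle the absolute values in the stated bound correctly, which is what forces the (natural) proviso $c<(1-\lambda)\beta$ — otherwise the right-hand side of the hypothesis is non-positive and the statement is either vacuous or needs this caveat made explicit — and (ii) observe that monotonicity of $\lambda^{x}$, the cancellation of $f_{ii}+f_{ij}$, and the sign reversal by $\ln\lambda$ are all valid for real arguments, so integrality of $n_i,n_j,t_1,t_2$ is never used. Note also that $t_2$ plays no role in the proof; it appears in the statement only for symmetry with the later results.
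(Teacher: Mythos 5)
Your proof is correct and follows essentially the same route as the paper's: bound both left-hand sides of (4) and (5) from below by $(f_{ii}+f_{ij})\lambda^{t_1}$ using positivity of $f_{ii},f_{ij}$ and monotonicity of $\lambda^{x}$, cancel the positive factor, and invert by taking logarithms --- the only cosmetic difference being that the paper splits into the cases $n_i>n_j$ and $n_i\le n_j$ while you handle both at once via $t_1=\max(n_i,n_j)$. Your explicit remark that the absolute-value form of the bound silently presupposes $c<(1-\lambda)\beta$ (otherwise the hypothesis can hold while (4) fails) is a genuine caveat the paper glosses over when it ``solves (10),'' and is worth keeping.
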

\begin{proof}
\begin{case}
If $ n_i  > n_j,$ then $t_1 = n_i$ and $t_2 = n_j.$ Thus, 
\begin{align}
& {\lambda}^{n_i}& <{\lambda}^{n_j} \hspace{1 cm}  (since\hspace{0.1 cm} 0<\lambda<1) \nonumber \\
or \quad & f_{ii}{\lambda}^{n_i}& <f_{ii}{\lambda}^{n_j} \hspace{1 cm} (since\hspace{0.1 cm} f_{ii}>0). \nonumber\\
Also, \quad & f_{ij}{\lambda}^{n_i}& <f_{ij}{\lambda}^{n_j}\hspace{1 cm} (since\hspace{0.1 cm} f_{ij}>0).\nonumber
\end{align}
Combining above two we get 
\begin{align}
(f_{ii}+f_{ij})\lambda^{n_i} < f_{ii}\lambda^{n_i} &+ f_{ij}\lambda^{n_j} < (f_{ii} + f_{ij})\lambda^{n_j}\\
&and\nonumber\\
(f_{ii}+f_{ij})\lambda^{n_i} < f_{ii}\lambda^{n_j} &+ f_{ij}\lambda^{n_i} < (f_{ii} + f_{ij})\lambda^{n_j}.
\end{align}
Using (4)--(5) in (8)--(9) we get sufficiency condition for link addition as below. 
\begin{align}
\hspace{0.1 cm}(f_{ii}+f_{ij})\lambda^{n_i} &> \frac{(f_{ii}+f_{ij})*c}{(1-\lambda)*\beta}.
\end{align}
On solving (10), we get that link between $i$, $j$ will happen if 
\begin{align}
n_i < & \frac{\left|\ln\left(\frac{c}{\left(1-\lambda\right)*\beta}\right)\right|}{\left|\ln \lambda\right|}. \nonumber
\end{align}
\end{case}
\begin{case}
If $n_i \le n_j$\\
Since (4) and (5) are symmetric, interchange $i$ and $j$.
\end{case}
\end{proof}
Similarly, we can derive sufficiency conditions when $f_{ij} < 0$ above as well as for link deletion. Algorithm 1 lists the steps for reaching a bilaterally stable network. That is, when no agent has any incentive to add or delete a link.
\begin{algorithm}
\footnotesize
    \SetKwInOut{Input}{Input}
    \SetKwInOut{Comment}{Comment}

    \Input{$c,\lambda, \beta, F, starting$ $network, flag = 1$}
\Comment{$i$ and $j$ are agents, flag = 1 means network is not bilaterally stable}	
\While{flag $==$ 1}{
flag $=$ 0\\
\For{$ i = 1$ to n}{
\For{$j = 1$ to n}{
\If {$i \neq j$}
{
\If { link is absent between $<i,j>$}
{
 Check link addition conditions (4) and (5) for $i$, $j$ and add link if they are true.\\
flag $=$ 1
}
\If { link is present between $<i,j>$}
{
  Check link deletion conditions (6) and (7) for $i$, $j$ and delete link if they are true.\\
flag $=$ 1
}
}

}
}
}
    \caption{Pseudo code to arrive at a bilaterally stable network.}
\end{algorithm}
\section{Case study}
Consider five agents as follows: $a, b, c, d$, and $e$. Assume that social relationships between these agents is captured in the Social Range Matrix $F$ given in Table \ref{my-label}. As an example, the way to read this matrix is as follows: $a$ is a friend of $b$ and $d$, while it is an enemy of $c$ and $e$. 

\begin{table}[!h]
\centering
\small
\caption{Social Range Matrix $F$}
\label{my-label}
\begin{tabular}{llllll}
    & $a$        & $b$        & $c$        & $d$        & $e$        \\
$a$ & $\epsilon$ & 1          & -1         & 1          & -1         \\
$b$ & 1          & $\epsilon$ & 1          & -1         & -1         \\
$c$ & -1         & 1          & $\epsilon$ & -1         & 1          \\
$d$ & 1          & -1         & -1         & $\epsilon$ & 1          \\
$e$ & -1         & -1         & 1          & 1          & $\epsilon$
\end{tabular}
\end{table}

\begin{figure}
\footnotesize
\begin{center}
      \includegraphics[width=3.5 cm , height=2.5cm]{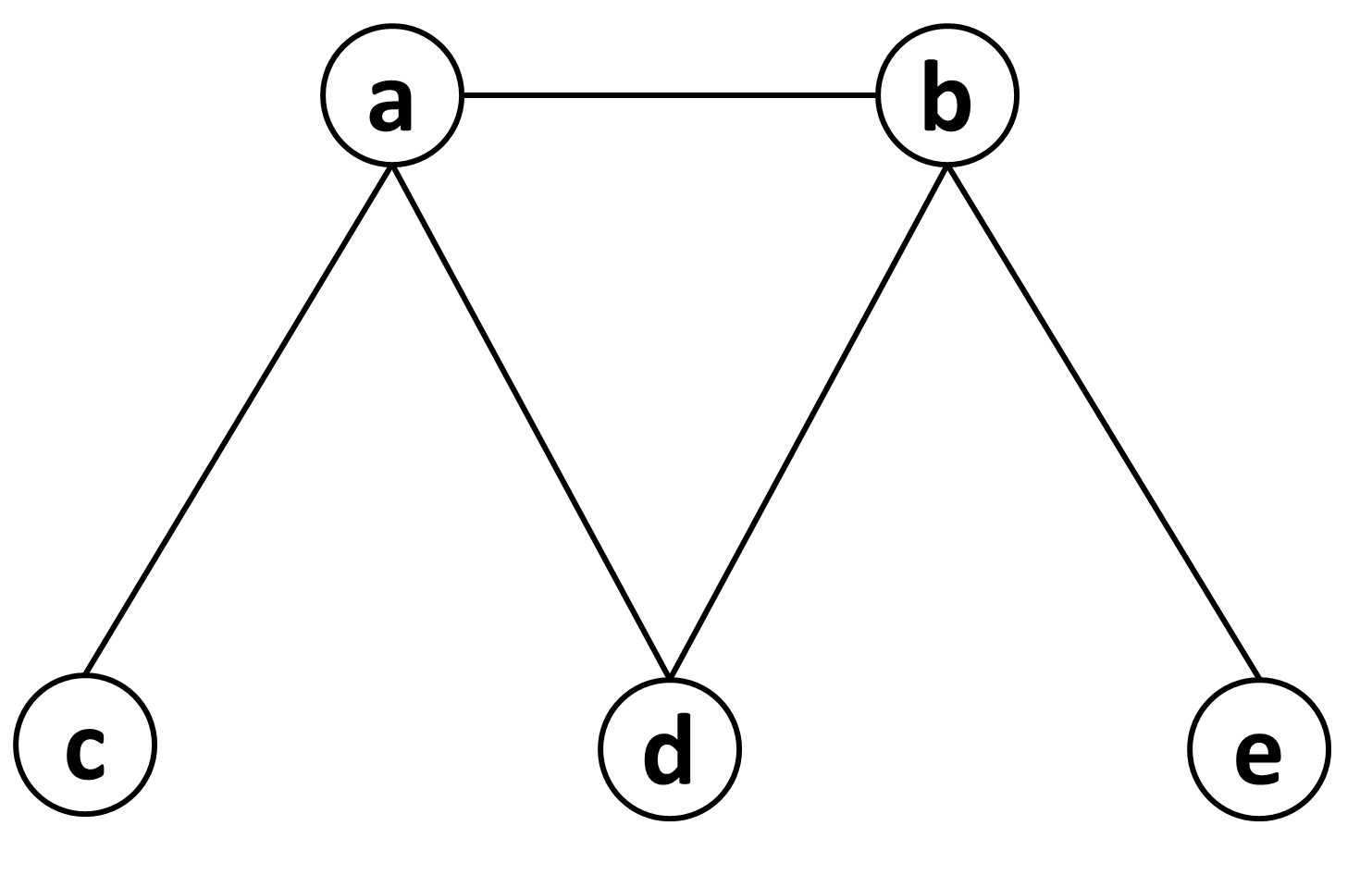}
\end{center}
  \caption{Bilaterally Stable Social Storage Network $g'$.}
  \label{fig:boat1}
\end{figure}

Let us also assume that $c = 0.01, \lambda= 0.2, \epsilon=0.1$, and $\beta=0.1$. Initially all agents are isolated (i.e., the starting network is a null network). We follow the procedure described in Algorithm 1 and obtain the bilaterally stable network $g'$ shown in Fig. \ref{fig:boat1}. With the same set of parameters and $F$, by starting from a random non-null network, we obtain a different bilaterally stable network than in Fig. \ref{fig:boat1}. This implies that bilaterally stable networks are not necessarily unique.

By looking at the ratio of $\frac{c}{\beta}$, where $0<c, \beta<1$, $\lambda= 0.2$, and above $F$, we found that for $0.044 < \frac{c}{\beta} < 0.089$, Algorithm 1 runs into infinite loop implying that no bilaterally stable network is possible. Interestingly, for all other values of $\frac{c}{\beta}$ we found that at least one bilaterally stable network exists. 

If $c > \beta$, $\lambda= 0.2$, and above $F$, then we get a unique bilaterally stable network in which all pair of agents are enemies.

\section{Unique Bilaterally Stable Networks}
\begin{lema}
If $\frac{c}{\beta} < (1-\lambda)\lambda^{N-2}$ and if all agents are friends of each other (i.e. $f_{ij}=1$), then a complete network is the unique bilaterally stable network.\\
\end{lema}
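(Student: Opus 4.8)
The plan is to establish two facts: (a) the complete network $g^{c}$ (in which every agent has $n_{i}=N-1$) is bilaterally stable, and (b) every other network fails to be bilaterally stable; together these yield uniqueness. Throughout I would use that $f_{ij}=1$ for $i\neq j$ while $f_{ii}=\epsilon>0$ (as fixed by the model), and I would first rewrite the hypothesis as $\frac{c}{(1-\lambda)\beta}<\lambda^{N-2}$, noting that this also forces $\frac{c}{(1-\lambda)\beta}<1$ since $\lambda^{N-2}\le 1$ --- a fact needed later for taking logarithms.

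For part (a): $g^{c}$ is complete, so no link can be added and only a mutually beneficial deletion must be ruled out. For any link $\langle ij\rangle$ in $g^{c}$ one has $n_{i}=n_{j}=N-1$ and $f_{ii}=f_{jj}=\epsilon$, $f_{ij}=1$, so I would substitute into the deletion conditions (6) and (7); both collapse to $(\epsilon+1)\lambda^{N-1}<\frac{(\epsilon+1)c\lambda}{(1-\lambda)\beta}$, i.e.\ $\lambda^{N-2}<\frac{c}{(1-\lambda)\beta}$, which contradicts the hypothesis. Hence no deletion is jointly profitable and $g^{c}$ is bilaterally stable.

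For part (b): let $g\neq g^{c}$, so some pair $i,j$ is non-adjacent; since $i$ and $j$ are not linked, neither can have more than $N-2$ neighbours, so $t_{1}=\max(n_{i},n_{j})\le N-2$. Taking $\ln$ of $\frac{c}{(1-\lambda)\beta}<\lambda^{N-2}$ and dividing by $\ln\lambda<0$ (which reverses the inequality) gives $N-2<\frac{\left|\ln\left(\frac{c}{(1-\lambda)\beta}\right)\right|}{\left|\ln\lambda\right|}$, hence $t_{1}<\frac{\left|\ln\left(\frac{c}{(1-\lambda)\beta}\right)\right|}{\left|\ln\lambda\right|}$. Since $f_{ij}=1>0$, Theorem~1 then applies, so the link addition conditions (4)--(5) hold for $i$ and $j$ and the pair would add a link; thus $g$ is not bilaterally stable.

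Combining (a) and (b), $g^{c}$ is the unique bilaterally stable network. I do not expect a deep obstacle: the one genuinely load-bearing observation is that the single hypothesis $\frac{c}{\beta}<(1-\lambda)\lambda^{N-2}$ does double duty --- it is precisely the negation of the (symmetric) deletion condition inside $g^{c}$, and, via Theorem~1 together with the bound $t_{1}\le N-2$, it forces link addition in every incomplete network. The main care required is in the logarithm and sign manipulations and in checking the edge case $N=2$ (where $\lambda^{N-2}=1$) so that the strict inequalities line up correctly.
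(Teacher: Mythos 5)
Your proof is correct and takes essentially the same route as the paper: the hypothesis is exactly the link-addition condition evaluated at $n_i=n_j=N-2$, which (via the monotonicity argument packaged in Theorem~1) forces addition for every non-adjacent pair in an incomplete network, while its negation defeats the deletion conditions (6)--(7) at $n_i=n_j=N-1$. If anything you are more complete than the paper, which leaves the deletion half as ``a similar analysis'' and does not explicitly verify stability of the complete network or the $\frac{c}{(1-\lambda)\beta}<1$ requirement needed to invoke Theorem~1.
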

\begin{proof}
First, consider the link addition conditions stated in (4) and (5). We can observe that, as $n_i$ and $n_j$ increases, the L.H.S of (4) and (5) decreases. This implies that agents have an incentive to increase their neighborhood size.

If the link addition conditions (4) and (5) are true for $n_i = n_j = N-2$ (both agents have the neighborhood size one less than the maximum possible size)\footnote{Neighborhood size of $N-1$ is a complete network, and hence, is not included in the analysis here.}, then they will be true for all values of $n_i$, $n_j <$ N-2 (if the smaller value of the L.H.S is greater than the R.H.S., bigger values will be greater).

Solving (4) -- (5) with $n_i = n_j = N-2$, i.e.
\begin{align*}
{f_{ii}{\lambda}^{N-2}+f_{ij}{\lambda}^{N-2}>\frac{(f_{ii}+f_{ij})*c}{(1-\lambda)*\beta}} 
\end{align*}
gives
$${\frac{c}{\beta} < (1-\lambda)\lambda^{N-2}}.$$
A similar analysis can be done with the deletion conditions.
\end{proof}

\begin{lema}
 If $\frac{c}{\beta} > (1-\lambda)$ and if all agents are friends of each other (i.e. $f_{ij}=1$), then the empty network is the unique bilaterally stable network.
\end{lema}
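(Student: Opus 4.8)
The plan is to mirror the proof of Lemma 1, but argue from the deletion side rather than the addition side. Recall from (6)--(7) that a link between $i$ and $j$ is deleted (by mutual consent) when $f_{ii}\lambda^{n_i} + f_{ij}\lambda^{n_j} < \tfrac{(f_{ii}+f_{ij})c\lambda}{(1-\lambda)\beta}$ together with the symmetric inequality. With $f_{ij}=1$ for all pairs, the first task is to observe the monotonicity of the deletion conditions in the neighborhood sizes: since $0<\lambda<1$, the left-hand sides $f_{ii}\lambda^{n_i} + \lambda^{n_j}$ are \emph{decreasing} in $n_i$ and $n_j$, so they are largest when $n_i$ and $n_j$ are smallest. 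Hence if the deletion condition holds at the smallest nontrivial neighborhood sizes, it holds everywhere relevant.

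Next I would identify the critical case. The smallest neighborhood size at which a link can still be present and then deleted is $n_i = n_j = 1$ (each agent has exactly the one link $\langle ij\rangle$, which is the link under consideration). Substituting $n_i=n_j=1$ into the deletion condition gives $(f_{ii}+1)\lambda < \tfrac{(f_{ii}+1)c\lambda}{(1-\lambda)\beta}$, which simplifies to $\tfrac{c}{\beta} > 1-\lambda$ after cancelling the common positive factor $(f_{ii}+1)\lambda$. Thus the hypothesis $\tfrac{c}{\beta} > 1-\lambda$ is exactly the condition that guarantees \emph{any} existing link between two agents gets deleted, regardless of their current degrees (by the monotonicity step, larger degrees only make the left-hand side smaller, so the strict inequality is preserved). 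Iterating Algorithm 1 from any starting network therefore strips away every link, terminating at the empty network.

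To finish, I would check the two remaining pieces. First, stability of the empty network: in the empty network every $n_i=0$, and I must verify the link \emph{addition} conditions (4)--(5) fail, i.e. that $f_{ii}\lambda^0 + \lambda^0 = f_{ii}+1$ is \emph{not} greater than $\tfrac{(f_{ii}+1)c}{(1-\lambda)\beta}$; this reduces to $\tfrac{c}{\beta} \ge (1-\lambda)$, which follows from the hypothesis $\tfrac{c}{\beta} > 1-\lambda$. So no pair wants to add a link, and the empty network is bilaterally stable. Second, uniqueness: since from an arbitrary network the deletion dynamics remove every link and the empty network admits no profitable addition, no other network can be bilaterally stable — any nonempty network has a link whose endpoints (by the computation above) mutually prefer to delete it.

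The main obstacle I anticipate is the monotonicity bookkeeping when the two endpoints have \emph{unequal} degrees: the deletion condition is a pair of asymmetric inequalities, and I need both to hold simultaneously. The clean way around this is to let $t_2=\min(n_i,n_j)$ and note that the more binding of the two inequalities is the one whose left-hand side is larger; bounding both left-hand sides above by $(f_{ii}+1)\lambda^{t_2} \le (f_{ii}+1)\lambda$ (using $t_2 \ge 1$ for a present link) reduces everything to the single scalar inequality $\tfrac{c}{\beta} > 1-\lambda$, exactly paralleling the $t_1$ bound used in Theorem 1. A minor secondary point to handle carefully is the degenerate possibility that during the algorithm's run some intermediate network is complete or near-complete; but the footnote-style reasoning of Lemma 1 applies verbatim, since the deletion condition only gets easier to satisfy as degrees grow.
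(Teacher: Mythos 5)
Your proof is correct and follows essentially the route the paper intends: the paper's own proof is just the one-line remark ``Similar to Lemma 1,'' and you have filled in exactly the analogous argument --- monotonicity of the deletion conditions (6)--(7) in $n_i,n_j$, evaluation at the extreme case $n_i=n_j=1$ to recover $\tfrac{c}{\beta}>1-\lambda$, and the failure of the addition conditions (4)--(5) at $n_i=n_j=0$ to confirm stability of the empty network. No gaps; your handling of unequal degrees via $t_2=\min(n_i,n_j)$ is a correct (and slightly more careful) version of the symmetric-case shortcut the paper uses in Lemma 1.
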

\begin{proof}
Similar to Lemma 1.
\end{proof}

\begin{coro}
If $\frac{c}{\beta} >\frac{1}{1-f_{ii}} (1-\lambda)$ and if $f_{ij}$ is either 1 or -1, then the bilaterally stable network will be the one where all the pairs of enemies form links.
\end{coro}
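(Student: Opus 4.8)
The plan is to analyze the addition conditions (4)--(5) and the deletion conditions (6)--(7) separately for a pair of friends ($f_{ij}=1$) and for a pair of enemies ($f_{ij}=-1$), and to show that under the stated hypothesis each pair's decision is forced \emph{uniformly in the current neighborhood sizes $n_i,n_j$}; since $f_{ij}\in\{1,-1\}$ leaves no neutral pairs, this pins down the network completely. I would first record two consequences of $\frac{c}{\beta} > \frac{1}{1-f_{ii}}(1-\lambda)$: because $0<f_{ii}<1$ gives $\frac{1}{1-f_{ii}}>1$, it implies $\frac{c}{\beta} > 1-\lambda$; and rearranging gives $\frac{(f_{ii}-1)c}{(1-\lambda)\beta} < -1$. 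These two facts drive the two cases.

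For a friend pair, the left-hand side of (4)--(5) equals $f_{ii}\lambda^{n_i}+\lambda^{n_j}\le f_{ii}+1$ for all $n_i,n_j\ge 0$, while its right-hand side $\frac{(f_{ii}+1)c}{(1-\lambda)\beta}$ exceeds $f_{ii}+1$ since $\frac{c}{\beta}>1-\lambda$; hence (4)--(5) fail and two friends never add a link. When a friend link is present, $n_i,n_j\ge 1$, so the left-hand side of (6)--(7) is at most $(f_{ii}+1)\lambda$, which is strictly below the right-hand side $\frac{(f_{ii}+1)c\lambda}{(1-\lambda)\beta}$ by the same estimate; hence (6)--(7) hold and an existing friend link is always deleted. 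So no bilaterally stable network contains a friend link.

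For an enemy pair, the right-hand side of (4)--(5) is $\frac{(f_{ii}-1)c}{(1-\lambda)\beta}<-1$, whereas the left-hand side $f_{ii}\lambda^{n_i}-\lambda^{n_j}$ (and its index-swapped twin in (5)) is always strictly greater than $-1$, since $f_{ii}\lambda^{n_i}>0$ and $\lambda^{n_j}\le 1$; hence (4)--(5) hold and two enemies always add a link. When an enemy link is present, $n_j\ge 1$ forces the left-hand side of (6)--(7) above $-\lambda$, while its right-hand side $\frac{(f_{ii}-1)c\lambda}{(1-\lambda)\beta}$ lies below $-\lambda$ (multiply the second consequence by $\lambda>0$); hence (6)--(7) fail and an enemy link is never deleted. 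So every bilaterally stable network contains all enemy links.

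Finally I would assemble the cases: the network $g^\ast$ whose links are exactly the enemy pairs admits no profitable addition (friend pairs fail (4)--(5); enemy pairs already linked) and no profitable deletion (enemy pairs fail (6)--(7); friend pairs already unlinked), so $g^\ast$ is bilaterally stable; any other network either lacks an enemy link or contains a friend link, hence has a profitable move and is not stable, giving uniqueness. As a byproduct, each iteration of Algorithm 1 only ever deletes a friend link or adds an enemy link, and both counts are monotone, so Algorithm 1 terminates at $g^\ast$ from any starting network. The step needing the most care is the enemy case: the threshold $\frac{1}{1-f_{ii}}(1-\lambda)$ is exactly what pushes the (negative) right-hand sides of the enemy addition and deletion conditions below the uniform left-hand-side bounds $-1$ and $-\lambda$, so the estimates are tight and one must keep track of the strictness of each inequality.
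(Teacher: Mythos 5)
Your proof is correct, and it supplies an argument the paper itself omits: the corollary is stated after Lemma~2 (whose proof is only ``Similar to Lemma~1''), so the paper's intended route is the Lemma~1 technique of evaluating the addition/deletion conditions at extremal neighborhood sizes and invoking monotonicity of the left-hand sides in $n_i,n_j$. Your argument is the natural completion of that technique for the mixed friend/enemy setting: you bound the left-hand sides of (4)--(7) uniformly over all $n_i,n_j$ (by $f_{ii}+1$ and $(f_{ii}+1)\lambda$ for friends, and from below by $-1$ and $-\lambda$ for enemies) and compare against the constant right-hand sides. Your decomposition of the hypothesis into the two consequences $\frac{c}{\beta}>1-\lambda$ (which kills friend links, exactly Lemma~2) and $\frac{(f_{ii}-1)c}{(1-\lambda)\beta}<-1$ (which forces enemy links) correctly explains why the threshold is $\frac{1}{1-f_{ii}}(1-\lambda)$ rather than $1-\lambda$: the enemy-pair requirement is the binding one, and it needs $0<f_{ii}<1$, which the paper's assumption $f_{ii}<|f_{ij}|=1$ guarantees. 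The care you take with strictness (e.g., $f_{ii}\lambda^{n_i}>0$ strictly while $\lambda^{n_j}\le 1$) is exactly what is needed since (4)--(7) are strict inequalities. Beyond what the corollary literally asserts, you also obtain existence, uniqueness, and termination of Algorithm~1 via the monotone friend-deletion/enemy-addition observation; none of this is in the paper, but all of it is sound and consistent with the paper's case-study remark that this regime yields a unique stable network.
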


\section{Future Work}
In this work, we have extended the social storage model proposed in \cite{Mane2017-social-storage} to include heterogeneous behavior (variety of social relationships). After that, we have analyzed this model using their solution concept of bilateral stability. The preliminary results here give tremendous insight into how a endogenously built social storage system would emerge.    

Future work involves further analyzing the stability of such networks as well as studying contentment (when everyone has maximized their utility) and efficiency (where the total utility of the network is maximized) in this new heterogeneous agent scenario.  


\bibliographystyle{IEEEtran}
\bibliography{ComsNets-2018}

\end{document}